\newcommand{\inv}{\mathrm{inv}}
\newcommand{\txt}{\textstyle}
\newcommand{\T}{^{\mathrm{T}}}
\newtheorem{prop}{Proposition}
\newtheorem{defn}{Definition}
\newtheorem{lem}{Lemma}
\newcommand{\ci}{\mbox{\protect{ $ \perp \hspace{-2.3ex}
\perp$ }}}
\newcommand{\dep}{\pitchfork}  
\newcommand{\n}[0]{\hspace*{.35em}}
\newcommand{\nn}[0]{\hspace*{.7em}}
\newcommand{\fourl}[0]{\hspace*{1.4em}}
\newcommand{\fla}{\mbox{$\hspace{.05em} \prec
\!\!\!\!\!\frac{\nn \nn}{\nn}$}}
\newcommand{\fra}{\mbox{$\hspace{.05em} \frac{\nn
\nn}{\nn
}\!\!\!\!\! \succ \! \hspace{.25ex}$}}
\renewcommand{\baselinestretch}{1.2}
\begin{document}

\begin{frontmatter}

\title{Star graphs induce tetrad correlations:\\  for Gaussian as well as for binary variables}
\runtitle{Star graphs induce tetrad correlations}
\thankstext{T1}{The work (also on ArXiv 1307.5396) reported in this paper  was undertaken during the first author's tenure of a Senior Visiting Scientist Award by the International Agency of Research on Cancer, Lyon.}

\begin{aug}
\author{\fnms{Nanny} \snm{Wermuth}\ead[label=e1]{wermuth@chalmers.se}},  \and  \author{\fnms{Giovanni M.} \snm{Marchetti}\ead[label=e2]{giovanni.marchetti@disia.unifi.it}}
\address{Mathematical Statistics, Chalmers University of Technology, Gothenburg, Sweden;
Medical Psychology and Medical Sociology, Gutenberg-University, Mainz, Germany\\
Dipartimento di Statistica, Informatica, Applicazioni,  ``G. Parenti'', 
University of Florence, Italy\\
\printead{e1}, \printead{e2}}\n\\

  \runauthor{N. Wermuth and G.M. Marchetti}

\end{aug}

\begin{abstract}
Tetrad correlations were obtained historically for Gaussian distributions when tasks  are designed to measure an ability or attitude so that a single unobserved  variable may generate the observed, linearly increasing dependences among the tasks.  We connect such generating processes to a particular type of  directed  graph, the star graph,  and to the notion of traceable regressions. Tetrad correlation conditions  for the existence of a single latent variable are derived.  These are needed for  positive dependences not only in joint Gaussian but also in  joint binary distributions. Three applications with binary items are  given. 
\end{abstract}

\begin{keyword}[class=AMS]
\kwd[Primary ]{62E10}
\kwd[; secondary ]{62H17}
\kwd{62H20}
\end{keyword}

\begin{keyword}
\kwd{Directed star graph}
\kwd{Factor analysis}
\kwd{Graphical Markov models}
\kwd{Item response models}
\kwd{Latent class models}
\kwd{Traceable regression}
\end{keyword}
\tableofcontents
\end{frontmatter}
\headsep=15pt
\renewcommand{\baselinestretch}{1.2}

\section{Introduction}
Since the seminal work by Bartlett (1935) and  Birch (1963), viewed in  a larger  perspective by Cox (1972), correlation coefficients were barely used as measures of dependence for categorical data.
Instead, functions of the odds-ratio emerged as the relevant  parameters in log-linear probability models for joint distributions and in logistic models, that is  for regressions with a binary response. Compared with other possible measures of dependence, the outstanding advantage of  functions of the odds-ratio is their variation independence 
of the marginal counts: odds-ratios are unchanged under different types of sampling schemes that result by fixing either  the total number, $n$, of observed  individuals or the counts at the levels of one of the variables; see Edwards (1963).

As one consequence of the importance of odds-ratios for discrete random variables, it is no longer widely  known that  Pearson's simple, observed  correlation coefficient, $r_{12}$ say,  coincides in $2 \times 2$ contingency tables with the so-called Phi-coefficient, so that $\sqrt{n}r_{12}$ is, asymptotically and under independence of the two binary variables, a realization of a standard Gaussian distribution. As we  shall see, some  properties of correlation coefficients 
for binary variables, make them important  for  data generating processes  that incorporate  many conditional independences.

In particular, we look here at   {\bf directed star graphs} such as the one  shown in Figure \ref{fig:1}(a). Such graphs have one inner node, $L$, from which  $Q$ arrows start and point  to the uncoupled, outer nodes
$1, \dots, Q$.  To simplify notation, the inner node $L$ denotes also the corresponding random  variable and  both, the node and the variable,  are called {\bf root}. The random variables $X_i$ corresponding to the outer nodes, also called the {\bf leaves} of the graph,  are identified just by their index $i$ taken from $\{1, \dots, Q\}$. 
\begin{figure}[H]
\centering
\includegraphics[scale=.44]{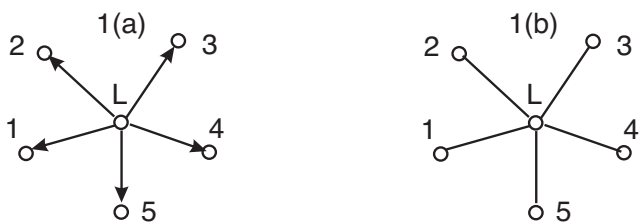}
\caption{A directed star graph (a) and the Markov equivalent undirected star graph (b).}  \label{fig:1}
\end{figure}

The {\bf independence structure} of the directed star graph is  mutual independence of the leaves
given the root. In the condensed node notation, this is written as \begin{equation}
(1 \ci 2\ci \cdots \ci Q) | L. \label{cind}
\end{equation}
In general, the types of variables can be of any kind. Densities, $f_N$, that are said to be generated over the directed star graph with node set $N=\{1,\dots, Q,L\}$,  may also be of any  kind,  provided  that they have the above  independence structure.

Each {\bf generated density} is defined by $Q$ conditional densities, $f_{i|L}$, that are called  regressions, and a marginal density, $f_L$,  of the root. In the condensed node notation, a  joint density with the independence structure  of a directed star graph, factorizes as
\begin{equation}
f_N = f_{1|L} \cdots f_{Q|L}f_L \,.\label {factden}
\end{equation}

Directed star graphs belong to the  class  of  {\bf regression graphs} and to their subclass of directed acyclic graphs.  Distributions generated over regression graphs have been named and studied as  {\bf sequences of regressions} and one  knows   when two regression graphs capture the same independence structure, that is when  they are {\bf Markov equivalent}, even though defined differently; see Theorem 1 in Wermuth and Sadeghi (2012).
In particular, each directed star graph is  Markov equivalent to an undirected star graph with the same node and edge set,
 such as  in Figure  \ref{fig:1}(b), since it does not contain  a {\bf collision {\sf V}}: $\circ \fra \circ \fla \circ$, that is two uncoupled arrows meeting head-on.

{\bf Sequences of regressions are traceable} if  one can use the graph alone to trace pathways of dependence, that is to decide when a non-vanishing  dependence is induced for an uncoupled node pair. For this,  each edge present in the  regression graph  corresponds to a non-vanishing, conditional dependence that is considered to be  strong in a given substantive context.   In evolutionary biology,  the required changes in dependences  that are strong enough  to lead to mutations,  were   for instance called `drastic' by Neyman (1971).
In general, special properties of the generated distributions are required in addition; see Wermuth (2012). 

In the last century,  regression  graphs  were not yet defined and properties of  traceable regression unknown. Then,  distributions generated over directed star graphs, in which the root $L$ corresponds to  a variable that is never directly observed, have been studied separately 
 under different distributional assumptions and with changing main objectives. Many of them were named {\bf item response models}. In these contexts, the  unobserved root $L$ is also  called {\bf latent or hidden}, the {\bf items} are the observed variables.

For instance, Spearman (1904) suggested  to measure 
general intelligence with $Q$ similar quantitative tasks, a method  now called  {\bf factor analysis with a single latent variable}. For confirmatory factor analyses, typically,  a Gaussian  distribution is assumed, after the observed variables are standardized to have  mean zero and unit variance. Heywood (1931) and Anderson and Rubin (1956)
derived necessary and sufficient conditions for the existence of one or more latent variables,
without the constraint of  non-vanishing,   positive dependences of each item on the latent variables. The latter  assumption arises however naturally when the items are to measure a specific latent
ability or attitude or when they are the symptoms of a given disease.

For instance, when psychologists try to measure for  children in a given age range,  what is called the  working memory capacity,  then each item is the successful repetition -- in reverse order -- of  a sequence of numbers.  Typical tasks of the same difficulty are the sequences  (3, 5,  7) and (2, 4, 6). The item difficulty increases, for instance, by presenting more numbers or
numbers of two digits. For children with a higher capacity of the working memory, one expects more successes for tasks of the same difficulty as well as  for tasks of increasing difficulty. 

When instead,  the  leaves and the root of  the star graph are both categorical,  the resulting model is {\bf a latent class model}, as   proposed by
Lazarsfeld (1950), again with an extensive follow-up literature.
An important warning was given  by
Holland and Rosenbaum (1986): such a model can never be falsified when the number of
levels of the latent variable is large compared to the number of cells in the observed contingency table. Expressed in other words, merely requiring conditional independence of categorical items given the latent variable  imposes then  no constraints on the observed distributions. 

A general, testable constraint suggested by Holland and Rosenbaum (1986), that is now widely adopted  in nonparametric item response theory, is to have a monotonically  non-decrea\-sing association of each item on $L$; see van der Ark (2012) or Mair and Hatzinger (2007). However, the underlying notion of `conditional association', that had been proposed  in probability theory,   includes conditionally independent variables as being  `conditionally associated'; see Esary, Proschan and Walkup (1967). By contrast, when one uses
traceable regressions, each edge present in
a  graph  excludes explicitly a corresponding  conditional independence statement.

In this paper, we  study similarities of  joint Gaussian and of binary  distributions generated over  directed star
graphs  where  all dependences of leaves, $1, \dots, Q$, on the root, $ L$,  are positive and the root is unobserved but does not coincide with any  leaf or with any combination of the leaves.  In Section 2,
we summarize  results for star graphs and for their traceable regressions. In Section 3, we describe  joint Gaussian distributions, so generated, and  in Section 4, we  study  joint binary distributions, especially correlation constraints on the  distribution of the leaves. Section 5,  gives applications to binary distributions, Section 6 a general discussion. The Appendix  contains a technical proof.

\section{Marginalizing over the root in star graphs}

A regression graph  is said to be {\bf edge-minimal} when each of its edges, that is present in the graph, indicates a non-vanishing dependence. 
\begin{defn} {\bf Traceable regressions} are generated over an edge-minmal, directed star graph, if (a)
 the density factorizes as in equation  \eqref{factden} and   (b) is {\bf dependence-inducing} by
marginalizing over the root $L$, that is yields  for each pair of leaves  $i,j $ a bivariate dependence, denoted by $i\dep j$.   \end{defn}

  Thus, for traceable regressions with exclusively strong dependences, $i\dep L$, in the generating star graph, each $ij$-edge in  the induced   {\bf complete covariance graph} of the leaves, drawn  as   in Figure~\ref{fig:2}(a), indicates a non-vanishing dependence, $i \dep j$. 
Probability distributions that do not induce such a dependence violate {\bf singleton-transitivity}; see Table 1 in Wermuth (2012)
for an example of such a family of distributions.
 
  More consequences can be deduced, using strong dependences $i\dep L$ and the Markov equivalence of the directed to the undirected star graph with the same node and edge set, as in Figures~\ref{fig:1}(a) and \ref{fig:1}(b). For  a traceable distribution with an undirected star graph, such as in Figure~\ref{fig:1}(b),  each edge in the induced  {\bf complete concentration graph}, obtained by marginalising over the root $L$ and  drawn  as in Figure~\ref{fig:2}(b), indicates  a non-vanishing 
  conditional dependence of each pair of leaves given the remaining leaves, denoted by $i\dep j|\{1, \dots, Q\}\setminus\{i,j\}$.
\begin{figure}[H]
\centering
\includegraphics[scale=.44]{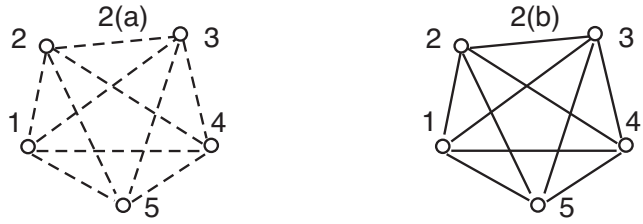}
\caption{Induced, complete graphs for the marginal distribution of the items, with the covariance graph in 2(a) resulting from 1(a) and the concentration graph in 2(b) from 1(b).}\label{fig:2}
\end{figure}

In exponential family distributions, a covariance graph corresponds to bivariate central moments, a concentration graph to joint canonical parameters and
a directed star graph to regression parameters.  If Markov equivalent models are also   parameter equivalent, then their parameter sets are in a one-to-one relation. This property does not  hold in general, but for instance in joint Gaussian distribution, and in joint  binary distributions  that are quadratic exponential.  More generally, 
Markov equivalence and parameter equivalence coincide when only a single parameter determines for each variable pair whether the pair 
is conditionally dependent or independent. 

For traceability of  a given sequence of regressions, the generated family of distributions  needs to have three properties of joint Gaussian distributions, that are  not common to all probability distributions. In addition to the dependence-inducing property (singleton-transitivity) stated in Definition 1, these are the {\bf intersection} (downward combination) and the {\bf composition}
property (upward combination of independences); see Ln\v{e}ni\v{c}ka and Mat\'u\v{s} (2007) equations (9) to (10)  and Wermuth (2012) section 2.4. 

Pairwise {\bf independences are said to  combine
downwards and upwards} if
$$(i\ci j|kc \text{ and } i\ci k|jc) \iff i\ci jk|c \nn \text{ and }\nn (i\ci j|c \text{ and }i \ci k|c) \iff i\ci jk|c\,,$$
respectively, for $c$ any subset of the remaining nodes, of $\{1,\dots Q\}\setminus \{i,j,k\}$. 
Both of these properties are already needed if one is using graphs of mixed edges  just to decide  on implied 
independences; see Sadeghi and Lauritzen (2014).

In the context of directed star graphs,  these two properties are  a consequence of the special type of generating process.
After removing any two arrows  for $Q\geq5$, a directed star graph in $Q-2$ arrows remains.  Thus, by introducing  two additional pairwise independences in a directed star graph, these independence statement combine downwards.
 The Markov equivalence of the directed to the undirected star graph implies that for any two nodes, the statement $i\ci j|L$ can be modified to  $i\ci j|Lc$ with $c \subseteq \{1,\dots Q\}\setminus \{i,j\}$ so that pairwise independences combine  upwards.

 \section{Gaussian distributions generated over directed star graphs}

In Gaussian distributions, each dependence is by definition linear and proportional to some (partial) correlation coefficient. Furthermore,   there are no higher than two-factor interactions.   In a traceable Gaussian distribution generated over  a directed star graph, each  directed edge  indicates a simple, strong  correlation, $\rho_{iL}$, 
called the {\bf loading of item $\bm i$ on $\bm L$}. Then for each item pair $(i,j)$ a simple correlation, $\rho_{ij}$ is induced   via 
\begin{equation} i\ci j|L\iff (\rho_{ij|L}=0) \iff  (\rho_{ij}=\rho_{iL}\rho_{jL}), \label{cindG}\end{equation}
where $\rho_{ij|L}$ denotes the partial correlation coefficient $$ \rho_{ij|L}=(\rho_{ij}-\rho_{iL}\rho_{jL})/\sqrt{(1-\rho_{iL}^2)(1-\rho_{jL}^2)}.$$

In factor analysis,   the latent root $L$ is assumed, without loss of generality,  to have mean zero and unit variance. Typically, when the  
observed variances of the items are nearly  equal,  the items  are transformed to have  mean zero and unit variance, so that their  correlation matrix is
analyzed. 

In general, the model parameters are known to be identifiable for $Q>2$; see for instance Stanghellini (1997).
For $Q=3$ items, the positive  loadings $\rho_{1L}, \rho_{2L}, \rho_{3L}$ can be completely recovered using equations \eqref{cindG} for the positive,  partial correlations $\rho_{ij | k}$  of each leaf pair. The  maximum-likelihood equations (Lawley, 1967) reduce to the same type of three equations so that	a unique,  closed form solution can be obtained, provided  it exists, and be written as $\hat{\bm \lambda}\T=(\hat{\rho}_{1L} \n \hat{\rho}_{2L} \n  \hat{\rho}_{3L})$ with
  \begin{equation} \hat{\rho}_{1L}= \sqrt{r_{12}r_{13}/r_{23}}, \nn \n  \hat{\rho}_{2L}= \sqrt{r_{12}r_{23}/r_{13}}, \nn \n  \hat{\rho}_{3L}= \sqrt{r_{13}r_{23}/r_{12}}.\label{estload}
 \end{equation}
 
  Clearly, these equations require for permissible estimates: $0< \hat{\rho}_{iL}<1$.  In that case, there can be no zero and no negative marginal or partial
  item correlation that cannot be removed by recoding some  items.
We give in Table~\ref{tab:heywood} three examples of $3\times 3$ invertible correlation matrices,  showing  marginal correlations  in the lower half  and  partial correlations   in the upper half. The first two examples have no permissible solution for $\hat{\bm \lambda}$ and illustrate so-called  Heywood cases.
  
\begin{table}[H]
\caption{Examples of invertible correlation matrices, on the left and in the middle: two Heywood cases that is no permissible solution to the maximum-likelihood equations; on the right: a perfect fit. }\label{tab:heywood}
\centering
$               \begin{matrix} \hline 
                           1 & 0.35 &  0.47 \\
                           0.40 & 1& 0 \\ 
                            0.50   &     0.20    &      1\\  \hline
                            \end{matrix}  \nn \fourl 
                            \begin{matrix} \hline 
                            1 & 0.64 &\n \; \,  0.55 \\
                           0.60 & 1& -0.29 \\ 
                            0.50   &     0.10    &      1\\ \hline
                         \end{matrix}, \nn \nn \nn
                         \begin{matrix} \hline 
                         1 & 0.57 &  0.39 \\
                           0.72 & 1& 0.20 \\ 
                            0.63   &     0.56    &      1\\\hline
                         \end{matrix}.              
                        $   
\end{table}                   
 In the example on the left,  $\rho_{23 | 1}=0$. By equation \eqref{estload}, the  estimated  loading of item 1 is then equal to one, i.e. $\hat{\rho}_{1L}=1$,  so that item 1 cannot be distinguished from the latent variable itself. For the example in the middle, $\hat{\rho}_{1L}=\sqrt{0.6\cdot 0.5/0.1}=\sqrt{3}$
 is larger than one, hence  leads to an infeasible solution for a correlation coefficient. Thus even for a  positive, invertible  item correlation matrix,  there may not exist a generating process via positive loadings on a single latent variable. The  example on the right, has a perfect fit for the  vector of estimated positive loadings  in equation \eqref{estload} as  $\hat{\bm{\lambda}}\T=[0.9 \nn  0.8 \nn 0.7]$.

For $Q>3$ items,   {\bf  proper positive loadings} in equation \eqref{cindG}, that is $0<\rho_{iL}<1$ for all items $i$, lead directly  to a positive
 correlation matrix  of the items, denoted in this paper by $\bm{P}>0$, that is to  exclusively positive correlations, $\rho_{ij}=\rho_{iL}\rho_{jL}$ for $i\neq j$ in $\{1,\dots,Q\}$, and to a tetrad structure.  Vanishing tetrads were defined  by Spearman and coauthors and nowadays a  popular search algorithm for models with possibly several latent variables  is named Tetrad. This algorithm is based on and extends work by Spirtes, Glymour and Scheines (1993, 2001).

\begin{defn}
{\bf A positive  tetrad correlation  matrix} $\bm{S}$ has dimension $Q >3$ and elements $s_{ij}$ such that $0<s_{ij}<1$ for all item pairs $i,j$ and
\begin{equation} s_{ih}/ s_{jh}=s_{ik}/ s_{jk}  \text{ for all distinct } i,j,h,k  \text{ taken  from } \{1, \ldots, Q\}.
\label{tetrad}\end{equation} \end{defn}
Thus for any  pair $i,j$,  the   ratio of its correlations to variables in  row $h$ of  $\bm S$ is the same as to variables in  row $k$, or, equivalently, 
there are {\bf vanishing tetrads}: $s_{ih}s_{jk} - s_{jh}s_{ik}=0$.
Let $\bm{\lambda}\T$ denote a row vector of proper positive loadings,  $\bm \Delta$ a diagonal matrix of elements $1-\rho_{iL}^2$,  and  $\bm{\delta}\T$ a row vector of elements $-\rho_{iL}/\{\sqrt{s}(1-\rho_{iL}^2)\}$,
 where  $s= 1+\sum_i \rho_{iL}^2/ (1-\rho_{iL}^2)$,  then, as proven in the Appendix, the correlation matrix of the leaves $\bm P$ and its inverse, the concentration matrix $P^{-1}$, are
  \begin{equation} \bm{P}=\bm{\Delta}+\bm{\lambda}\bm{\lambda}\T, \nn   \nn   {\bm P}^{-1}={\bm \Delta^{-1}}-{\bm \delta}{\bm \delta}\T \label{tetradcor}.\end{equation}
  
  Some important direct consequences of \eqref{tetradcor} are given next. Lemma 1 uses the notion of  M(inkowski)-matrices that were named and studied by Ostrowski (1937, 1956) and discussed in connection to totally positive Gaussian distributions much later using the name {\bf MTP$_2$ distributions}; see e.g. Karlin and Rinott  (1983). {\bf General MTP$_2$ distributions}  are characterized by having no variable pair  negatively associated given all remaining variables. We will use a more strict form of MTP$_2$ that also excludes any variable pair being conditionally independent given all remaining variables.

\begin{defn} 
A square, invertible matrix  is a {\bf complete  M-matrix} if all its diagonal elements  are positive and  all its off-diagonal elements are negative.
\end{defn}

\begin{lem} {\bf A positive tetrad correlation matrix, ${\bm P}$, generated over a star graph} with proper positive loadings, $0<\rho_{iL}<1$, is
invertible and ${\bm P}^{-1}$ is  a symmetric,  complete M-matrix with vanishing tetrads.
\end{lem}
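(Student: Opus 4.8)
The plan is to work directly from the two representations in \eqref{tetradcor}, which the Appendix already establishes, and to read off the three assertions essentially by inspection. Throughout I write $\delta_i = -\rho_{iL}/\{\sqrt{s}(1-\rho_{iL}^2)\}$ for the entries of $\bm\delta$, and I keep in mind that the hypothesis $0<\rho_{iL}<1$ forces every $1-\rho_{iL}^2>0$ as well as $s=1+\sum_i \rho_{iL}^2/(1-\rho_{iL}^2)>0$.

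First I would settle invertibility. Since each diagonal entry $1-\rho_{iL}^2$ of $\bm\Delta$ is strictly positive, $\bm\Delta$ is positive definite, while $\bm\lambda\bm\lambda\T$ is positive semidefinite; hence $\bm P=\bm\Delta+\bm\lambda\bm\lambda\T$ is positive definite and therefore invertible, with inverse the stated rank-one (Sherman--Morrison) update $\bm\Delta^{-1}-\bm\delta\bm\delta\T$.

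Next, the complete M-matrix property follows by examining the entries of $\bm P^{-1}$. For $i\neq j$ only the rank-one term contributes, so $(\bm P^{-1})_{ij}=-\delta_i\delta_j$; because every $\delta_i<0$, the product $\delta_i\delta_j>0$ and each off-diagonal entry is strictly negative. For the diagonal, $(\bm P^{-1})_{ii}=1/(1-\rho_{iL}^2)-\delta_i^2$, and substituting $\delta_i$ reduces positivity to the inequality $s>\rho_{iL}^2/(1-\rho_{iL}^2)$. This is the only genuine computation and I expect it to be the main, if modest, obstacle; it is immediate from the definition of $s$, since $s=1+\sum_j \rho_{jL}^2/(1-\rho_{jL}^2)$ already contains the single summand $\rho_{iL}^2/(1-\rho_{iL}^2)$ together with the strictly positive $1$. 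Thus every diagonal entry is positive and every off-diagonal entry negative, which is exactly the complete M-matrix property.

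Finally, the vanishing tetrads come for free from the rank-one structure of the off-diagonal part. Writing $t_{ij}=(\bm P^{-1})_{ij}=-\delta_i\delta_j$ for distinct indices (all nonzero, by the previous step), the tetrad difference collapses as $t_{ih}t_{jk}-t_{jh}t_{ik}=\delta_i\delta_h\delta_j\delta_k-\delta_j\delta_h\delta_i\delta_k=0$, equivalently $t_{ih}/t_{jh}=\delta_i/\delta_j=t_{ik}/t_{jk}$ is independent of the anchor row, matching the requirement in \eqref{tetrad}. This establishes all three claims.
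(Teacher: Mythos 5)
Your proposal is correct and follows the same route the paper intends: Lemma~1 is presented there as a direct consequence of the explicit forms $\bm{P}=\bm{\Delta}+\bm{\lambda}\bm{\lambda}\T$ and $\bm{P}^{-1}=\bm{\Delta}^{-1}-\bm{\delta}\bm{\delta}\T$ established in the Appendix, and you simply make the entrywise reading explicit (sign of $-\delta_i\delta_j$ off the diagonal, the inequality $s>\rho_{iL}^2/(1-\rho_{iL}^2)$ on the diagonal, and the rank-one cancellation for the vanishing tetrads). The only remark worth adding is that the diagonal positivity also follows at once from the positive definiteness of $\bm{P}^{-1}$, so your one ``genuine computation'' could be bypassed.
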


When we denote the elements of  $\bm{P}^{-1}$  by $\rho^{ij}$, then by Lemma 1 and Definition 3,  all the precisions, $\rho^{ii},$ are positive and  all concentrations  are negative, that is $\rho^{ij}<0$, for all $i\neq j$, if $\bm{P}^{-1}$ is a complete M-matrix.
The following important  properties of complete M-matrices result from  Ostrowski (1956), Section 1.

\begin{lem} A symmetric matrix $\bm m$, which has an inverse complete M-matrix, \\
$(i)$ has exclusively positive elements, that is $\bm{m}>0$, and\\
$(ii)$  the {\bf inverse of every  principal submatrix of}  $\bm m$ is a complete M-matrix.
\end{lem}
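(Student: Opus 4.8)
The plan is to carry out both parts by passing to the complete M-matrix $\bm A:=\bm m^{-1}$ and exploiting its structure. By hypothesis $\bm A$ is symmetric, with positive diagonal entries $a_{ii}>0$ and strictly negative off-diagonal entries $a_{ij}<0$ for $i\neq j$; being a symmetric M-matrix (so with positive principal minors in the sense of Ostrowski), $\bm A$ is positive definite, a fact I will use repeatedly. In the paper's setting $\bm m=\bm P$ is moreover a correlation matrix, so positive definiteness is automatic.

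For part $(i)$ I would use the regular splitting $\bm A=\bm D-\bm C$, where $\bm D=\diag(\bm A)$ is the positive diagonal part and $\bm C$ has zero diagonal and strictly positive off-diagonal entries $c_{ij}=-a_{ij}>0$. Writing $\bm A=\bm D^{1/2}(\bm I-\bm B)\bm D^{1/2}$ with $\bm B=\bm D^{-1/2}\bm C\bm D^{-1/2}\geq 0$ entrywise, positive definiteness of $\bm A$ forces the largest eigenvalue of the symmetric nonnegative matrix $\bm B$ to lie below $1$, so its spectral radius satisfies $\rho(\bm B)<1$ and the Neumann series converges:
\[
\bm m=\bm A^{-1}=\bm D^{-1/2}\Big(\sum_{k\geq 0}\bm B^{k}\Big)\bm D^{-1/2}.
\]
Every summand is entrywise nonnegative, so no cancellation occurs: the $k=0$ term makes each diagonal entry positive, and the $k=1$ term $\bm D^{-1}\bm C\bm D^{-1}$ already has strictly positive off-diagonal entries because those of $\bm C$ are. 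Hence $\bm m>0$ entrywise, which is claim $(i)$. Equivalently, this is the statement that an irreducible nonsingular M-matrix has a strictly positive inverse, irreducibility holding here because all off-diagonal entries of $\bm A$ are nonzero.

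For part $(ii)$, fix an index set $S$ with complement $T$ and recall the block-inverse identity expressing a principal submatrix of $\bm m$ as a Schur complement in $\bm A$,
\[
(\bm m_{SS})^{-1}=\bm A_{SS}-\bm A_{ST}(\bm A_{TT})^{-1}\bm A_{TS}.
\]
I would prove the claim by eliminating the indices of $T$ one at a time. Removing a single index $t$ replaces $a_{ij}$ by $a_{ij}-a_{it}a_{tj}/a_{tt}$ for $i,j\neq t$. Since $a_{tt}>0$ and $a_{it},a_{tj}<0$, the correction $a_{it}a_{tj}/a_{tt}$ has sign $(-)(-)/(+)>0$ and is subtracted from the already negative $a_{ij}$, so the off-diagonal entry stays strictly negative; the diagonal update $a_{ii}-a_{it}^{2}/a_{tt}$ stays positive because the $2\times2$ principal submatrix of $\bm A$ on $\{i,t\}$ is positive definite, giving $a_{ii}a_{tt}-a_{it}^{2}>0$. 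Thus one elimination step returns a symmetric complete M-matrix, and by the quotient rule for Schur complements the iterate obtained after deleting all of $T$ equals the right-hand side above, namely $(\bm m_{SS})^{-1}$; hence it too is a complete M-matrix.

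The step I expect to be the main obstacle is controlling the diagonal in part $(ii)$: the sign bookkeeping that settles the off-diagonal entries does not by itself keep the updated diagonal positive, because a positive quantity is being subtracted. Resolving this requires positive definiteness, entering through the positive $2\times2$ principal minors (equivalently, the M-matrix characterization by positive principal minors), which is also the ingredient that secures convergence of the Neumann series in part $(i)$.
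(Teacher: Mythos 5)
Your proof is correct, and it is genuinely different from what the paper offers: the paper gives no argument for Lemma~2 at all, simply attributing both parts to Ostrowski (1956), Section~1. Your Neumann-series expansion for part $(i)$ and the one-index-at-a-time Schur-complement elimination (with the quotient property) for part $(ii)$ together form a clean, self-contained substitute for that citation, and they make visible exactly where each hypothesis is used. One point you flag deserves emphasis, because it is a genuine gap in the paper rather than in your argument: Definition~3 in the paper asks only for invertibility plus the sign pattern (positive diagonal, negative off-diagonal), and under that literal reading the lemma is false --- for instance $\bm A=\left(\begin{smallmatrix}1 & -2\\ -2 & 1\end{smallmatrix}\right)$ is symmetric, invertible and has the required signs, yet its inverse is entrywise negative. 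You repair this by importing the true M-matrix property (positive principal minors, hence positive definiteness in the symmetric case), which is what Ostrowski's class of matrices with dominant main diagonal actually possesses and what holds automatically for the concentration matrices $\bm P^{-1}$ of positive definite correlation matrices to which the paper applies the lemma; that added hypothesis is precisely what makes your Neumann series converge in $(i)$ and keeps the updated diagonal entries positive in $(ii)$. What the paper's route buys is brevity; what yours buys is an explicit identification of the missing positive-definiteness assumption and an elementary verification that does not require the reader to consult Ostrowski.
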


Thus, if the concentration matrix $P^{-1}$ of observed Gaussian items  has exclusively negative off-diagonal elements, then 
all concentrations in every marginal distribution of the items are negative as well and ${\bm P}>0$.

\begin{lem}  {\bf Partial correlations, $\bm{\rho_{ij.c}}$ of a positive tetrad correlation matrix}, $\bm P$, generated over a star graph with proper positive loadings,  are positive for every $c\subseteq \{1, \ldots, Q\}\setminus\{i,j\}$ and form a positive tetrad matrix.
\end{lem}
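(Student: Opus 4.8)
The plan is to show that the single latent-factor structure of the star graph is inherited by every conditional distribution obtained by fixing a set of leaves, so that the partial correlations $\rho_{ij.c}$ again factor as a product of ``conditional loadings''. From such a product form both assertions of the lemma---positivity and the tetrad property---follow at once, just as the marginal product form $\rho_{ij}=\rho_{iL}\rho_{jL}$ of \eqref{cindG} forces the marginal tetrad structure of $\bm P$ recorded in \eqref{tetradcor}.

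First I would write the generating process in standardized form, $X_i=\rho_{iL}L+\varepsilon_i$ with $L\sim N(0,1)$, mutually independent residuals $\varepsilon_i$ of variance $1-\rho_{iL}^2$, and $\varepsilon_i\ci L$. Fix distinct $i,j$ and a set $c\subseteq\{1,\dots,Q\}\setminus\{i,j\}$ and write $\bar c=\{1,\dots,Q\}\setminus c$. The structural input is the statement $i\ci j\,|\,Lc$ established in Section 2 from the Markov equivalence of the directed and undirected star graphs; equivalently $X_i\ci X_j\mid (L,X_c)$ for all distinct leaves outside $c$. Consequently, conditionally on $X_c$, the collection $(X_{\bar c},L)$ is again Gaussian and generated over a star graph with root $L$, the only change being that $L\mid X_c$ now has conditional variance $v=1-\rho_{cL}\T\bm P_{cc}^{-1}\rho_{cL}$, where $\rho_{cL}$ collects the loadings of the leaves in $c$ and $\bm P_{cc}$ is the corresponding correlation submatrix; nondegeneracy of the joint distribution gives $0<v\le 1$.

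Next I would read off the conditional second moments of this star model, namely $\cov(X_i,L\mid X_c)=\rho_{iL}v$ and $\var(X_i\mid X_c)=\rho_{iL}^2 v+(1-\rho_{iL}^2)$, so that the conditional loading is
\[
\tilde\rho_{iL}=\frac{\rho_{iL}\sqrt{v}}{\sqrt{\rho_{iL}^2 v+1-\rho_{iL}^2}},\qquad 0<\tilde\rho_{iL}<1,
\]
the bounds following from $0<\rho_{iL}<1$ and $0<v\le 1$. The same product relation as in \eqref{cindG}, now applied inside the conditional star model, then gives $\rho_{ij.c}=\tilde\rho_{iL}\,\tilde\rho_{jL}$, whence $0<\rho_{ij.c}<1$; this establishes positivity of every partial correlation. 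A second, independent route to positivity is available through Lemma 2$(ii)$: writing $\rho_{ij.c}=-p^{ij}/\sqrt{p^{ii}p^{jj}}$ with $p^{ij}$ the entries of the inverse of the principal submatrix of $\bm P$ on $\{i,j\}\cup c$, that inverse is a complete M-matrix and so has negative off-diagonal entries, forcing $\rho_{ij.c}>0$.

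Finally, the product form makes the tetrad property immediate: for distinct $i,j,h,k\in\bar c$,
\[
\rho_{ih.c}/\rho_{jh.c}=\tilde\rho_{iL}/\tilde\rho_{jL}=\rho_{ik.c}/\rho_{jk.c},
\]
equivalently $\rho_{ih.c}\rho_{jk.c}-\rho_{jh.c}\rho_{ik.c}=0$, so the matrix with entries $\rho_{ij.c}$ satisfies \eqref{tetrad}; being bounded strictly between $0$ and $1$, it is a positive tetrad correlation matrix in the sense of Definition 2 whenever $|\bar c|>3$. I expect the only genuinely delicate step to be the justification that conditioning on an arbitrary set of leaves leaves a \emph{nondegenerate} single-factor model with unchanged loadings and strictly positive factor variance $v$; once this conditional star representation is secured, positivity and all tetrad relations are forced by the resulting product structure.
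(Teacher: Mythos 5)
Your argument is correct, but it is not the route the paper takes. The paper's proof is purely linear-algebraic: it combines the identity $\rho_{ij|c}=-\rho^{ij}/\sqrt{\rho^{ii}\rho^{jj}}$ with Lemma~1 (the concentration matrix $\bm P^{-1}$ is a complete M-matrix with vanishing tetrads, via the explicit form $\bm P^{-1}=\bm\Delta^{-1}-\bm\delta\bm\delta\T$ from the Appendix) and Lemma~2$(ii)$ (the inverse of every principal submatrix of $\bm P$ is again a complete M-matrix), so positivity comes from the sign of the off-diagonal concentrations and the tetrad property from their rank-one structure. You instead condition on $X_c$ directly and show that $(X_{\bar c},L)\mid X_c$ is again a nondegenerate single-factor Gaussian model with factor variance $v=1-\rho_{cL}\T\bm P_{cc}^{-1}\rho_{cL}>0$ and conditional loadings $\tilde\rho_{iL}=\rho_{iL}\sqrt{v}/\sqrt{\rho_{iL}^2v+1-\rho_{iL}^2}\in(0,1)$, whence $\rho_{ij.c}=\tilde\rho_{iL}\tilde\rho_{jL}$; your second-moment computations check out ($\cov(X_i,X_j\mid X_c)=\rho_{iL}\rho_{jL}v$, etc.), and the ``delicate'' point you flag, $v>0$, follows from positive definiteness of the joint correlation matrix $\bm\Psi=\bm\Delta+(\bm\lambda\;1)\T(\bm\lambda\;1)$ augmented appropriately, i.e.\ from the residual variances $1-\rho_{iL}^2$ being strictly positive. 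What your route buys is a structural statement stronger than the lemma itself --- closure of the single-factor model under conditioning on leaves, with an explicit formula for the conditional loadings --- and it delivers positivity and the tetrad relations in one stroke from the product form, exactly parallel to how \eqref{cindG} yields \eqref{tetradcor} marginally. What the paper's route buys is brevity given Lemmas~1 and~2 already in hand, and it isolates the M-matrix property as the operative feature of the observable concentration matrix, which is the object the authors want to emphasize for the characterizations in Proposition~1$(iv)$--$(v)$. You even note the M-matrix argument as a ``second route'' to positivity; that aside is essentially the paper's entire proof.
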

\begin{proof} The result follows with  
the following general relation of elements of a concentration matrix to  partial correlations, together with Lemma 1 and Lemma 2.
 \begin{equation} \rho_{ij|c}=-\rho^{ij}/\sqrt{\rho^{ii}\rho^{jj}} \text{ for } c=\{1, \ldots, Q\}\setminus\{i,j\},  \label{conc}\end{equation} see for instance Wermuth, Cox and Marchetti (2006), Section 2.3. \end{proof}

 Thus, for a complete concentration matrix of Gaussian items, negative concentrations mean positive dependence and
 every item pair is positively dependent, no matter which conditioning set is chosen.
 Now, a known rank-one condition for the existence of a Gaussian item correlation matrix $\bm{P}^*\!$ of a single  factor simplifies as follows.

\begin{prop} {\bf Equivalent necessary and sufficient conditions for a Gaussian distribution}. For  $Q> 3$, the following statements are equivalent:\\ 
$(i)$  a  ${\bm P}^*>0$ is generated over a star graph  and with proper positive loadings, $\rho^*_{iL}$,\\
$(ii)$ a $\bm{P}^*>0$  minus a diagonal matrix, of elements $\delta_{ii}^*$ with $0<\delta_{ii}^*<1$,  has rank one,\\
$(iii)$ a  tetrad $\bm{P}^*>0$ of the items is formed by proper positive loadings, \\
$(iv)$  there is  a concentration M-matrix  $({\bm P}^*)^{-1}$ of the items having  vanishing tetrads,\\
$(v)$ there are item partial correlations, given the $Q-2$ other items,  which form a positive tetrad correlation matrix.
\end{prop}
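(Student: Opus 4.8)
The plan is to prove the five statements equivalent through a short cycle, anchored at the explicit single-factor decomposition \eqref{tetradcor} and at Lemmas 1--3; throughout, recall that $\bm P^*>0$ abbreviates a correlation matrix all of whose off-diagonal entries are strictly positive.

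First I would dispatch the equivalence of $(i)$, $(ii)$ and $(iii)$, which is essentially algebraic. From $(i)$, equation \eqref{cindG} gives $s_{ij}=\rho^*_{iL}\rho^*_{jL}$ for every item pair; by \eqref{tetradcor} this reads $\bm P^*=\bm\Delta+\bm\lambda\bm\lambda\T$ with $\bm\Delta=\diag(1-(\rho^*_{iL})^2)$, so $\bm P^*-\bm\Delta=\bm\lambda\bm\lambda\T$ has rank one and, since $0<\rho^*_{iL}<1$, its diagonal satisfies $0<\delta^*_{ii}<1$, which is $(ii)$; the common ratio $s_{ih}/s_{jh}=\rho^*_{iL}/\rho^*_{jL}$ simultaneously yields the tetrad structure $(iii)$. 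The converse $(ii)\Rightarrow(i)$ is where care is needed: writing the rank-one part as $\bm P^*-\bm D=\sigma\,\bm w\bm w\T$ with $\sigma=\pm1$, positivity of all off-diagonals $\sigma w_iw_j>0$ forces $\sigma=+1$ and a common sign of the $w_i$, because the products of three or more nonzero reals cannot all be negative (here $Q>3$); thus $\bm w=\bm\lambda$ may be taken strictly positive, and the unit diagonal gives $(\rho^*_{iL})^2=1-\delta^*_{ii}\in(0,1)$, recovering the generating star graph. Finally $(iii)\Rightarrow(i)$ is immediate, a tetrad matrix formed by proper positive loadings being exactly one with $s_{ij}=\rho^*_{iL}\rho^*_{jL}$.

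Next I would attach $(iv)$ and $(v)$. The forward implications are already available: $(i)\Rightarrow(iv)$ is Lemma 1 and $(i)\Rightarrow(v)$ is Lemma 3. For $(iv)\Rightarrow(ii)$ I would use that vanishing tetrads of $(\bm P^*)^{-1}$ mean the off-diagonal entries satisfy $\rho^{ih}\rho^{jk}=\rho^{jh}\rho^{ik}$ for all distinct indices, which for $Q>3$ forces a rank-one pattern $\rho^{ij}=-\delta_i\delta_j$ $(i\neq j)$ with $\delta_i>0$ once the complete M-matrix property fixes all off-diagonals negative. Hence $(\bm P^*)^{-1}=\bm D-\bm\delta\bm\delta\T$ with $\bm D$ diagonal, and inverting by the Sherman--Morrison formula returns a diagonal-plus-rank-one matrix $\bm P^*=\bm D^{-1}+c\,\bm w\bm w\T$; Lemma 2 guarantees $\bm P^*>0$, and reading off the unit diagonal pins the scalar $c$ and shows that the recovered $\delta^*_{ii}=(\bm D^{-1})_{ii}$ and loadings lie in $(0,1)$, which is $(ii)$. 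The equivalence $(iv)\Leftrightarrow(v)$ I would obtain from \eqref{conc}: since $\rho_{ij|c}=-\rho^{ij}/\sqrt{\rho^{ii}\rho^{jj}}$ with $c$ the remaining $Q-2$ items, positivity of these partial correlations is precisely negativity of the off-diagonal $\rho^{ij}$, while the tetrad structure of the partials transfers to vanishing tetrads of $(\bm P^*)^{-1}$ because the symmetric $\sqrt{\rho^{ii}\rho^{jj}}$ scaling cancels in each $2\times2$ cross-ratio.

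The main obstacle is the backward passage from the inverse-based characterizations $(iv)$ and $(v)$ to the generating model, i.e.\ guaranteeing that the reconstruction yields \emph{proper} loadings $0<\rho^*_{iL}<1$ and variances $0<\delta^*_{ii}<1$ rather than merely some rank-one-plus-diagonal inverse. Two ingredients make it go through: the vanishing-tetrad hypothesis, which is exactly what upgrades a general complete M-matrix to one whose off-diagonal part is \emph{rank one}, so that Sherman--Morrison preserves the diagonal-plus-rank-one form on inversion; and the hypothesis $Q>3$, which supplies enough distinct quadruples $i,j,h,k$ to force the product representation $\rho^{ij}=-\delta_i\delta_j$. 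The remaining positivity and range constraints then follow from Lemma 2 together with positive definiteness of $\bm P^*$, completing the cycle.
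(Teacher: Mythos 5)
Your proof is correct, and it takes a noticeably more self-contained route than the paper's. The paper disposes of $(i)\iff(ii)$ and of the equivalence between the rank-one condition and vanishing tetrads by citing Anderson and Rubin (1956, Theorem 4.1), and it obtains $(iii)\iff(iv)$ from Lemmas 1 and 3 together with the Appendix computation, where the converse direction (from a complete M-matrix with vanishing tetrads back to a positive tetrad $\bm P$) is essentially asserted rather than argued. You replace the Anderson--Rubin citation with an elementary sign argument: writing $\bm P^*-\bm D=\sigma\,\bm w\bm w\T$ and using that three pairwise products $w_iw_j$ cannot all be negative to force $\sigma=+1$ and a common sign of the $w_i$, which for $\bm P^*>0$ delivers proper positive loadings directly. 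You also supply the missing backward step $(iv)\Rightarrow(ii)$ explicitly, by first extracting the product form $\rho^{ij}=-\delta_i\delta_j$ from the vanishing tetrads of the complete M-matrix and then applying Sherman--Morrison, with Lemma 2 pinning the sign of the rank-one term and the unit diagonal pinning the range of the $\delta^*_{ii}$; this is the same computation the paper performs in the Appendix via partial inversion, but run in the converse direction where it is actually needed. The remaining links, $(i)\Rightarrow(iv)$ and $(i)\Rightarrow(v)$ via Lemmas 1 and 3 and $(iv)\iff(v)$ via \eqref{conc}, coincide with the paper's. What your version buys is independence from the external reference and an explicit proof of the only implication the paper leaves implicit; what the paper's version buys is brevity and a pointer to the classical source, which also covers the case without the positivity restriction.
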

\begin{proof}  In their terminology,   Anderson and Rubin (1956) proved $(i) \iff (ii)$ in their Theorem 4.1, without requiring proper positive loadings. They show also   that the rank-one  condition is equivalent to vanishing  tetrads provided $0\leq \rho_{ik}\rho_{jk}\leq \rho_{ij}$, that is for  ${\bm P}^*>0$ to $0<\rho_{ij.k} < 1$, for all distinct items $i,j,k$. For $\bm{P}^*>0$, this implies $(iii)$. By Lemma 3 and Lemma 1 above,
$(iii)  \iff (iv)$, by equation
\eqref{conc}, $(iv) \iff (v)$ and  equation \eqref{tetradcor} gives $(iii) \implies (ii)$.
\end{proof}

Conditions $(i)$ to $(iii)$ in Proposition 1 involve dependences of the items on the latent variable $L$, that is the unobserved loadings. 
In some applications, prior knowledge may be so strong that a positive loading can be safely predicted for each selected item, but otherwise,
these characterizations involve unknown parameters.

By contrast, conditions $(iv)$ and $(v)$ in Proposition 1 concern directly the distribution of only the observed items. Equivalence to the former conditions
for existence become possible by the added special properties of the concentration matrix $({\bm P}^*)^{-1}$ being a complete M-matrix with vanishing tetrads.

The following example shows that a positive, tetrad correlation matrix alone, does not assure that its inverse is  a M-matrix.
The example is another Heywood case,    conditions $(iv)$ and $(v)$ of Proposition 1 are violated:
\begin{equation}  {\bm P} =\begin{pmatrix} 
    1.00     &     0.84    &      0.84      &    0.84\\
          .       &   1.00      &    0.64      &    0.64\\
          . &          .    &      1.00     &     0.64\\
          . &          .&         .&        1.00\end{pmatrix}, \nn \fourl
{\bm P}^{-1}=\begin{pmatrix}
         16.20    &     -6.00    &     -6.00   &      -6.00\\
          .    &   \n \,4.22    &   \n \,   1.44    &   \n \,   1.44\\
         .& .            &   \n \,  4.22   &   \n \,    1.44\\
         .     &     .      & .  &     \n \,     4.22\end{pmatrix},
         \label{heywood}
 \end{equation}   
 where the dot-notation indicates symmetric entries.   
    
This correlation matrix cannot have been  generated over a directed star graph with  proper positive loadings.   If these correlations were observed, one would  get with equation $\eqref{estload}$ that $\hat{\rho}_{1L}>1$, that is not a permissible solution. By contrast, every ${\bm P}^{-1}$ that is a complete M-matrix with vanishing tetrads implies a positive  tetrad correlation matrix.
\section{Binary distributions generated over directed star graphs}

When $Q$ binary items are mutually independent given a binary  variable $L$ and the factorization of  the joint probability in equation \eqref{factden} cannot be further simplified, since each item has a strong dependence on $L$, then it is a traceable regression generated  over a directed star graph.
The reason is that binary distributions are, just like
Gaussian distributions via equation \eqref{cindG}, {\bf dependence-inducing}, that is 
$$ (i\dep L \text{ and }j\dep L)  \implies  \text{ at most } i\ci j|L  \text{ or } i\ci j \text{ but never both.} $$
The property assures for star graphs with   the independences of equation \eqref{cind}, $i \ci j|L$,   that $i\dep j$ is implied for each item pair. 
In applications,  strong dependences of each item $i$ on $L$ are needed to obtain relevant dependences for each leaf pair $i,j$.

The joint binary distribution  generated over a directed star graph is quadratic exponential since the $Q$ largest cliques in this type of a decomposable graph contain just two nodes, an item and  $L$. 
Expressed equivalently, in the  log-linear model of an undirected star graph, as in  Figure \ref{fig:1}(b), the  largest, non-vanishing  log-linear interactions are  positive  2-factor terms, $\alpha_{iL}$.  These are  canonical parameters in the generated binary quadratic exponential family.
Marginalizing over
$L$ in such distributions with $\alpha_{iL}>0$  gives a  tetrad form for the canonical parameters in the observed  item distribution; see Cox and Wermuth (1994), Section 3:
\begin{equation} \alpha_{ih}/ \alpha_{jh}=\alpha_{ik}/ \alpha_{jk}  \text{ for all distinct } i,j,h,k  \text{ taken  from } \{1, \ldots, Q\}. \label{tetradodd}\end{equation}
With $\alpha_{iL}>0$ for all $i$, the joint binary distributions of the items have exclusively positive  dependences.
In general, the parameters $\alpha_{iL}$
are identifiable for $Q\geq 3$, see Stanghellini and Vantaggi (2013). But equation  \eqref{tetradodd}  does not lead to a
an explicit form of the  induced bivariate dependences for the item pairs. 

For this, we write for instance for any two items $A$, $B$, and $L$, each with levels $0$ or $1$, 
$$ \pi_{ijl}=\Pr(A=i, B=j, L=l), \n \pi_{+j+}=\txt{\sum_{il}} \pi_{ijl}=\Pr(B=j), \n \pi_{ij+}=\Pr(A=i, B=j), $$
as well as
$$ \bm{\pi}_{AL} =\begin{pmatrix} \pi_{0+0} & \pi_{0+1}\\ \pi_{1+0} & \pi_{1+1}
 \end{pmatrix},   \nn \nn  \bm{\pi}_{LB} =\begin{pmatrix} \pi_{+00} & \pi_{+10} \\
\pi_{+01}  &\pi_{+11} 
 \end{pmatrix},  \nn \nn  \bm{\pi}_{L} =\begin{pmatrix} \pi_{++0} & 0\\
0& \pi_{++1} 
 \end{pmatrix}.$$
 
 There are  three equivalent expressions of Pearson's correlation coefficient for the binary items 1 and 2. To present these, we  use $\kappa_{12}=\sqrt{\pi_{0++}\pi_{1++}
 \pi_{+0+}\pi_{+1+}}$ and abbreviate  the operation of taking a determinant by $\det(.)$\,:
  \begin{eqnarray} \label{defrho}
  \rho_{12}&=&(\pi_{00+}\pi_{11+}-\pi_{10+}\pi_{01+})/\kappa_{12} \label{odr_rho}\, ,\\[1mm]
              &=&(\pi_{11+}-\pi_{1++}\pi_{+1+})/\kappa_{12} \label{cov_rho} \, ,\\[2mm]
               &=& \det(\bm{\pi}_{A}^{-1/2} \bm{\pi}_{AB}\;\bm{\pi}_{B}^{-1/2}) \label{det_rho}\, .
                 \end{eqnarray} 
               
 Equation \eqref{odr_rho} shows that the correlation is given by the cross-product difference, that it  is zero if and only if the odds-ratio, defined as the cross-product ratio, equals one and that a positive correlation is equivalent to a positive log-odds ratio. Equation \eqref{cov_rho}  gives as numerator 
 the covariance and as denominator the product of two standard deviations.  This is the usual definition of Pearson's correlation coefficient, here  for binary variables, and it implies that $\rho_{12}$
 together with the one-dimensional frequencies of items 1 and 2 give the counts in their $2\times 2$ table.  Equation \eqref{det_rho} leads
best to our main new result.

 From equations \eqref{factden} and  \eqref{det_rho}, we have for each {\bf source {\sf V}} of the star graph, that is for each configuration $A \fla L \fra B$, a trivariate binary distribution with $A\ci B|L$ and
  \begin{equation}  \bm{\pi}_{AB}=\bm{\pi}_{AL}(\bm{\pi}_L)^{-1} \bm{\pi}_{LB}\,, \\ \label{cind_rho}\,.\end{equation}
 Hence, the existence of a correlation matrix $\bm{P}^*$ becomes relevant for binary variables.
 \begin{prop} \!\!{\bf Equivalent necessary conditions for
joint binary distributions.} For $Q>3$ items to be generated over a directed star graph with a  binary root $L$
\\
$(i)$ there is a  tetrad correlation matrix $\bm{P}^*>0$ formed by proper positive loadings, $\rho_{iL}$, \\
$(ii)$ there are item partial correlations, given the $Q-2$ other items, which  form a positive tetrad correlation matrix.
\end{prop}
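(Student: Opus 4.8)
The plan is to reduce Proposition 2 to Proposition 1 by proving that, for binary items generated over the directed star graph, the observed item correlation matrix $\bm{P}^*$ obeys the very same product law $\rho_{ij}=\rho_{iL}\rho_{jL}$, relation \eqref{cindG}, that holds in the Gaussian case. Once this is established, statement (i) follows at once, and the equivalence (i) $\iff$ (ii) is nothing but the equivalence of statements (iii) and (v) in Proposition 1: that equivalence, proved there through the chain (iii) $\iff$ (iv) $\iff$ (v) via Lemmas 1--3 and equation \eqref{conc}, refers only to the correlation matrix of the items and never to Gaussianity. Hence the real content is to establish the product law as a \emph{necessary} consequence of generation over the star graph.

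First I would fix two leaves $i,j$ and read off from the factorization \eqref{factden} that the margin on $(i,j,L)$ satisfies $i\ci j|L$, so that the $2\times2$ table identity \eqref{cind_rho} holds, $\bm{\pi}_{ij}=\bm{\pi}_{iL}(\bm{\pi}_L)^{-1}\bm{\pi}_{Lj}$. Substituting this into the determinant form \eqref{det_rho} of $\rho_{ij}$ and splitting the diagonal factor as $(\bm{\pi}_L)^{-1}=\bm{\pi}_L^{-1/2}\bm{\pi}_L^{-1/2}$, multiplicativity of the determinant yields
\begin{equation*}
\rho_{ij}=\det\!\bigl(\bm{\pi}_i^{-1/2}\bm{\pi}_{iL}\bm{\pi}_L^{-1/2}\bigr)\det\!\bigl(\bm{\pi}_L^{-1/2}\bm{\pi}_{Lj}\bm{\pi}_j^{-1/2}\bigr)=\rho_{iL}\rho_{jL},
\end{equation*}
each factor being the determinant form \eqref{det_rho} of a correlation on the corresponding bivariate margin. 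Positivity and properness of the loadings then follow: the canonical 2-factor terms satisfy $\alpha_{iL}>0$, which by \eqref{odr_rho} is equivalent to $\rho_{iL}>0$, while the latent root coinciding with no leaf rules out $\rho_{iL}=1$. The tetrad identity \eqref{tetrad} is now immediate from $\rho_{ih}/\rho_{jh}=\rho_{iL}/\rho_{jL}=\rho_{ik}/\rho_{jk}$, so $\bm{P}^*$ is a positive tetrad correlation matrix formed by proper positive loadings, which is (i); the equivalence with (ii) is then inherited from Proposition 1.

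The main obstacle I anticipate is the determinant step. One must check that inserting $\bm{\pi}_L^{-1/2}\bm{\pi}_L^{-1/2}$ and regrouping is legitimate and that each resulting $2\times2$ determinant really is the correlation on the $(i,L)$ or $(L,j)$ margin rather than a pair-dependent artefact. This is clean precisely because $\bm{\pi}_i,\bm{\pi}_j,\bm{\pi}_L$ are diagonal matrices of one-dimensional marginals and because $\rho_{iL}$ is defined intrinsically from the $(i,L)$ table, so one and the same value $\rho_{iL}$ enters every pair containing $i$; it is this pair-independent reading of $\rho_{iL}$ that makes the tetrad structure fall out. Finally, this also explains why only necessity can be claimed: unlike the Gaussian case, a positive tetrad correlation matrix does not by itself guarantee the existence of a valid joint binary distribution over the star graph, so the converse directions of Proposition 1 have no binary analogue here.
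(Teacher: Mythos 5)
Your proposal is correct and follows essentially the same route as the paper: the paper likewise pre- and post-multiplies equation \eqref{cind_rho} by $\bm{\pi}_A^{-1/2}$ and $\bm{\pi}_B^{-1/2}$, takes determinants via \eqref{det_rho} to obtain $\rho_{ij}=\rho_{iL}\rho_{jL}>0$ for every item pair, and then invokes the arguments of Proposition 1 for the equivalence of $(i)$ and $(ii)$. Your additional remarks on why the loadings are proper and why only necessity holds in the binary case are consistent elaborations of what the paper states in the surrounding discussion.
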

\begin{proof} Premultiplying equation \eqref{cind_rho} by $\bm{\pi}_A^{-1/2}$ and post-multiplying it by $\bm{\pi}_B^{-1/2}$ gives, for $0<\rho_{iL}<1$  for all $i$, with equation \eqref{det_rho}, that is  after taking determinants, $ \rho_{12}=\rho_{1L}\rho_{2L}>0\, .$
Since this holds for all item pairs,
\begin{equation}       \rho_{ij}=\rho_{iL}\rho_{jL}>0 \text{ for all } i,j \in \{1, \ldots, Q\},
\end{equation}
and the positive tetrad correlation matrix is a consequence
of the generating process.  The same arguments as in Proposition 1 give the equivalence of the two statements.
\end{proof}
This result corrects a claim in Cox and Wermuth (2002) that a tetrad condition does not show in correlations of binary items but only in their  canonical parameters  in the induced distribution for  the $Q$ items:
just as in Gaussian distributions, a  positive, invertible tetrad  correlation matrix, $\bm P$, 
is induced for binary  items if the dependence of each item on the latent binary $L$  is  positive, that is if $\rho_{iL}>0$ for all $i$.

 Nevertheless, the  directly relevant  dependences for  the induced, complete concentration graph model are the canonical parameters, the log-linear interaction terms that are functions of the odds-ratios. For instance, for a binary distribution  generated over a star graph to have  a general MTP$_2$ distribution, the condition is $\alpha_{iL} \geq 0$,  while  for a  strictly positive subclass, the condition is $\alpha_{iL} > 0$ for all items $i$.

Necessary and sufficient conditions, that a density of equation \eqref{factden}  has generated the observed distribution of $Q=3$ items, have been derived as nine inequality constraints on  the probabilities of the item by Allman et al. (2013) without noting their relation to MTP$_2$ distributions.
For $\pi_{ijk} >0$, the first three reduce to
$$ \pi_{111+}/  \pi_{011+}   \geq  \pi_{100+}/\pi_{000+},\nn
	\pi_{111+}/\pi_{101+}   \geq  \pi_{010+}/  \pi_{000+}  ,\nn
	\pi_{111+}/ \pi_{110+} \geq  \pi_{001+}/  \pi_{000+} ,\label{zwiern} $$
so that for each leaf,  the odds for level 1 to 0  when the levels of the other two leaves  match at level 0 do not exceed those with matches at level 1. Their last six constraints  require nonnegative odds-ratios for each leaf pair, that is a binary MTP$_2$ distribution.

It can be shown that the above three inequalities  are satisfied, whenever each leaf pair $i,j$ has a  positive 
conditional dependence given the $Q-2$ remaining leaves, that is if the leaves have a strictly positive distribution. Therefore, this strict form of a MTP$_2$ binary distribution of the leaves is also sufficient for just $Q=3$ leaves to have been generated over a star graph with positive dependences of the leaves on the root. This implies but is not equivalent to ${\bm P}^{-1} $ having exclusively negative off-diagonal elements.

More complex characterizing inequality constraints on  probabilities  of the leaves for $Q>3$ categorical variables,  when leaves and root have the same number of levels,
are due to Zwiernik and Smith (2011). It remains to be seen how  they simplify for  binary  MTP$_2$  distributions of the leaves  or with a complete, tetrad   ${\bm P}^{-1} $. 

\section{Applications}

We use here three sets of binary items. The  first is a medical data set, the last two are psychometric data sets where the questions were  chosen 
to expect strong positive dependences of each item on a latent variable. 
As discussed above, to check   conditions for the existence of a single latent variable, that might have generated the observed item dependences,
we use  here mainly the observed item correlation matrices and the observed  marginal  tables of all item triples.
In the first two cases, no violations are detected. In the third case,  item correlation matrices alone provide already enough evidence against
the hypothesized generating process. Algorithms to compute maximum-likelihood estimates for latent class models, are widely available; see for instance
Linzer and Lewis (2011).

\subsection{Binary items indicating gestosis that may arise during  pregnancy}
Worldwide, EPH-gestosis is still the main cause for a woman's death during childbirth and a major risk for death of the child during birth or within a week after birth. It is until today not a well-understood  illness, rather it is characterized by the occurrence of two or more symptoms, of  edema (E:=high body water retention), proteinuria (P:=high amounts of urinary proteins) and  hypertension (H:=elevated blood pressure).

Little research  into  causes of EPH-gestosis appears to have  been undertaken during  the last 50 years, possibly  because  in higher developed countries  its worst negative consequences are avoided 
by intervening when two of the symptoms are observed. Our data are from the prospective study `Pregnancy and Child Development'  in Germany, see the research report of the German Research Foundation (DFG, 1977). The symptoms were  recorded before
birth for 4649 pregnant women. 

As a convention, we order in this paper  counts reported in vectors such that the levels change from 0 to 1 and the levels of the first variable changes fastest, those of the second change next and so on. The observed counts for the gestosis data are then
$$ \bm{n}\T =( 3299 \nn 78  \nn   107  \nn  11 \nn 1012   \nn  65  \nn  58  \nn   19).
$$
 There are exclusively positive conditional dependences for  the symptom pairs since all odds-ratios are larger than 1; with values  
 of 4.4 and 5.1  for E,P, of 2.7 and 3.2 for E,H and of 1.8 and 2.1 for P,H, where the given level of the third symptom changes from 0 (absence) to 1.
 Except for P,H at level 1 of E, the corresponding confidence intervals exclude negative dependences. The observed relative frequencies satisfy the relations of equation \eqref{zwiern} directly.
 Thus, the above observed counts support the hypothesis of a generating directed star graph with a latent binary root.

Some additional features of the data are given next. The first symptom (E) is present for 3.7\%, the second (P) for 4.2\% and the third (H) for 24.8\% of the women.  Of the symptom pairs, 
E,P are seen for 65, E,H for 181 and P,H for 166 of 1000 women and  all three symptoms  for 41 of 1000 women. The bivariate dependences are strong and positive; with values for the odds-ratios of 5.5 for E,P, 3.0 for E,H and 2.0 for P,H. 

The corresponding correlations
look smaller than expected for quantitative features because E and P are rare symptoms. They have values  0.13 for E,P,  0.11 for E,H,  and 0.07
for P,H and the inverse of the observed correlation matrix is a complete M-matrix.

\begin{table}[H]
\caption{lower triangle: marginal item correlations and loadings,  upper triangle:  partial correlations.}\label{tab:eph}
\centering
$
\begin{array}{ccc}
\begin{matrix} \hline
  1 & 0 &  0 & 0.42\\
  0.13 & 1&  0& 0.26\\ 
   0.11  &     0.07    &      1&0.21 \\ 
   0.44&0.29& 0.24& 1 \\ \hline
 \end{matrix}    \end{array}                       
$
\end{table}

By equating the standardized central moments to the observed correlations, the estimates $\tilde{\rho}_{iL}$ are added in the last row to the observed correlations in  
Table \ref{tab:eph}; used is   the order E,P,H,$L$.
The identity matrix within the  matrix of partial correlations given the remaining two variables indicates the perfect fit of the correlations to the hypothesized generating process via a star graph.

\subsection{Binary items in a  small depression scale}

From an evaluation study   of a short depression scale developed by 
 Hardt  (2008), 
we use four binary items.  The answers  (no:=0, yes:=1)  are to the questions:  feeling    
hopeless (item 1, with 35.1\% yes),  dispirited (item 2, with 27.6\% yes), empty inside (item 3, with 24.2\% yes),  loss of  happiness (item 4, with 33.0\% yes). 
The  observed row vector of counts is, again ordered as described in section 5.1,
$$ \bm{n}\T=(533\nn     52   \nn  22  \nn   27  \nn    8  \nn   15  \nn    4   \nn  14 \nn 46  \nn   32  \nn    4  \nn   48   \nn  19  \nn   25 \nn    18   \nn 141).$$
All conditional odds-ratios for items 1 and 2 are positive, with values  12.6,    1.9,   17.3,   6.0, respectively  for levels  (00, 10, 01,11) of items 3 and 4.
Even though there are 1008 respondents,  some  subtables contain only  small numbers.    
Especially for many items, tests in such tables have little power and may therefore not be very informative. 
We therefore concentrate on trivariate subtables.

For each triple of the items, we show  in Table~\ref{tab:loglin}  studentized log-linear interaction parameters for the 2-factor terms and the 3-factor term. Each is a log-linear term estimated under the hypothesis that it is zero and 
divided by an estimate of its standard deviation; see for instance  Andersen and Skovgaard (2010). To simplify the display, we list the involved item numbers,
but use the same notation for interaction terms, for instance $AB$ is for the first two listed items, for items 1,2 in the triple 1,2,3 but for items 2,3 in the triple 2,3,4.
\begin{table}[H]
\caption{Studentized interaction parameters. For each item triple, $A:=$  first, $B:=$   second $C:=$ third.}\label{tab:loglin}
\centering
$     
\begin{matrix}
\hline
\text{Item triple}&AB& AC&BC&ABC \\ \hline
1,2,3&10.7   &     7.6  &       10.3& 3.4 \\ 
1,2,4& 11.9   &    9.0  &        \n9.0 &   0.6\\
1,3,4&     8.7 &       6.8  &       11.8&       4.1\\
2,3,4&           9.2 &     8.4 &        11.2 &1.3\\ \hline\\[-3mm]
\end{matrix} 
$ 
\end{table}

         The 3-factor interactions for items
                            1,2,3 and for 1,3,4 are not negligible but all interaction terms are positive and the 3-factor term
                            is always smaller than any  of the 2-factor terms.
                                                        
                            The similarity of the positive dependences at each level of the third variable shows here best in the  two conditional  relative
         risks for the first pair in each item triple: 
         $$ (5.9,\nn 1.5;\nn \nn  5.6, \nn 1.9; \nn \nn 5.7,\nn 1.3;\nn \nn  5.7,\nn 2.0).      
$$

                            The marginal observed correlations, are shown  next   with Table~\ref{tab:1}(a),  in the lower triangle,  and the partial correlations computed with equation \eqref{conc} in the upper triangle. The same type of display is used in Table~\ref{tab:1}(b) with an estimated vector $\tilde{\lambda}\T$ of  loadings, $\tilde{\rho}_{iL}$,  added in the last row, computed as if the correlation matrix were a sample from a Gaussian distribution.                             
                            \begin{table}[H]
\caption{(a) Marginal item correlations in the lower triangle, partial correlations given two remaining items in the upper triangle;
 (b)  lower triangle: as in Table~\ref{tab:1}(a) but with a  row vector of loadings added,  upper triangle: corresponding partial correlations given the remaing three variables.}\label{tab:1}
\centering
$
\begin{array}{ccc}
\begin{matrix} \hline
  1 & 0.37 &  0.16&  0.26\\
  0.62 & 1&  0.26& 0.20\\ 
   0.53  &     0.57    &      1&0.36 \\ 
   0.57&0.56& 0.60& 1 \\ \hline
 \end{matrix}                           
&\qquad  \fourl \fourl &
\begin{matrix} \hline
        1 & 0.08 &        -0.07   &       \n \;0.00   &       0.45\\
        0.62 & 1&  \n \;0.00  &       -0.06    &      0.48 \\ 
        0.53  &     0.57    &       \n \;1&     \n \;0.08  &        0.44 \\ 
        0.57&0.56&  \n \;0.60&  \n \;1 & 0.46 \\
        0.76 &         0.77    &       \n \;0.74    & \n \;0.76  &     1\\ \hline
\end{matrix}
\\
(a) & & (b) \\
\end{array}
$ 
\end{table}
All marginal correlations are positive and strong in the context of binary answers to questions concerning feelings. All partial correlation given the remaining two items are also positive. Thus, these two necessary conditions for the existence of a single latent variable are satisfied by the given counts.
With  $\tilde{\rho}_{iL}$ added in Table~\ref{tab:1}(b), the partial correlations for each item pair, given the remaining two items and the latent 
variable, are  quite close to zero, as they should be under the model.

\subsection{Binary items in a  failed attempt to construct a scale}
In the same study of the previous section, the participants were asked whether  their parents  fought often. For 538  respondents who answered  yes, 
answers to reasons of the fighting were  (no:=0, yes:=1) to:     
hot temper (item 1, with 50.0\% yes),  money  (item 2, with 46.5\% yes), alcohol (item 3, with 36.1\% yes),  jealousy (item 4, with 24.3\% yes). 
The observed vector of counts is, again ordered as described in section 5.1, 
$$ 
\bm{n}\T=(46\nn     113   \nn  63  \nn   55  \nn   59  \nn   14  \nn    42   \nn  15 \nn 10  \nn   23  \nn    17  \nn   17  \nn  13  \nn   10\nn    19   \nn 22).
$$

Tables~\ref{tab:2}(a)  and \ref{tab:2}(b) contain marginal correlations in the lower half of a  matrix; Table~\ref{tab:2}(a)  for all four items and Table~\ref{tab:2}(b)  for items 1 to 3. In the upper half  are  the partial correlations, computed with equation \eqref{conc} from the corresponding overall concentration matrices.

\begin{table}[H]
\caption{(a) Marginal (lower half) and partial item correlations (upper half) given the two remaining  items; (b) For items 2,3,4, marginal (lower half) and partial correlations (upper half)} \label{tab:2}
\centering
$
\begin{array}{ccc}
\begin{matrix}\hline
   1 & -0.12 &  -0.29&0.12\\
   -0.12 &\n  1& \n \; 0.01& 0.13\\ 
    -0.28  &  \n \,   0.06    &      1&0.17 \\ 
    \n \; 0.06&  \n\; 0.12& \n \, 0.15&  1 \\ \hline
\end{matrix} 
& \qquad \fourl \fourl&                          
\begin{matrix} \hline
   1 & 0.04 &      0.11   \\
   0.06 & 1& 0.15  \\ 
    0.12  &     0.15    &    1\\ \hline
\end{matrix} \\
(a) & & (b) \\
\end{array}
$
\end{table}
In Table~\ref{tab:2}(a),  item 1 has negative dependences on items 2 and 3, the one on item 3 is strong ($\sqrt{538}(-0.28)=-6.5$). This correlation cannot  result even when $\rho_{3L}=0$, let alone when $\rho_{3L}>0$.
For the three remaining items, the correlations in Table~\ref{tab:2}(b) are all positive, but too small to support the hypothesized generating process with a  useful  indicator variable  $L$.

{\section{Discussion}} 

\subsection{Factor analysis applied to data with concentration M-matrices}
After Spearman had introduced factor analysis in 1904, he and others
claimed that the vanishing of the tetrads is necessary and 
sufficient for the existence of a single latent factor. It
was Heywood (1931) who proved that the condition was only
necessary when negative correlations 
(marginal or partial) are also permitted. His results  implied 
that $0\leq \rho_{ij.k}\leq 1$
for all distinct item triples is needed, in addition to vanishing tetrads, for a
necessary and sufficient condition, in general. The same result 
was proven by Andersen and Rubin (1956) using a  
rank  condition.

Heywood mentions that under some sort
of strict positivity, a vanishing of tetrads may indeed give 
a necessary and sufficient condition, but the 
relevant properties of  a M-matrix  were unknown  
at the time. These relevant features, used here in Proposition~1 $(iv)$, were derived  by Ostrowski (1956) 
without having any applications  in statistics in mind. The connection 
of M-matrices to  Gaussian concentration matrices 
was only recognized much later by Bolviken (1982).
It may  also be checked that Spearman's  (1904) applications lead to complete  concentration M-matrices.

 For Gaussian distributions, complete concentration M-matrices define  a subclass  that is even more constrained than the one  that is MTP$_2$, where 
 off-diagonal zeros may arise and indicate  conditional independence;  see e.g. Karlin and Rinott (1983).

 Proposition 1  concerns  a generating process via a directed star graph for  a Gaussian correlation matrix.  Important 
are  the two equivalent constraints,  $(iv)$ and  $(v)$,  on only the observable  distribution: the concentration matrix  of the leaves is a complete M-matrix with vanishing tetrads and partial 
correlations of leaf-pairs $i,j$, given the remaining $Q-2$  leaves form a positive tetrad correlation matrix. The second of these two is easier to recognize due to the scaling of correlations.

\subsection{Applications of binary star graphs}

Joint binary distributions generated over general
 directed star graphs are extremely constrained; see Figure  1  in Allman 
 et al. (2013) and even more for strictly positive  conditional dependences of the leaves on the root, as these  define a special subclass of the general binary  MTP$_2$ family,  studied by  Bartolucci and Forcina (2000).

 Nevertheless, the  structure in the medical and in one psychometric data set of Section 5.3 can be explained by  such a generating process.  There,  strong prior  knowledge about $\rho_{iL}$, based on observing many patients, permits to select suitable sets of symptoms or items: these are  three symptoms for 
EPH-gestosis and four  suitable binary items that may indicate depression.

 Inequality constraints  on probabilities of the  joint  distribution of the leaves  have been given by Zwiernik and Smith (2011) in their Proposition 2.5, restated in simplified form  for $\bm{P}>0$ and $Q=3$ by Allman et al (2013).  The latter are discussed here using equation \eqref{zwiern}.
 Compared to  Gaussian distributions, Proposition 2 contains   the same conditions on the  correlations of the leaves and on the partial correlations of the leaves. The latter are easy to check but they are for general types of binary variables  only necessary conditions.

Applications of phylogenetic star graphs and trees, as started by Lake (1994), were based on incomplete
characterizations that have been completed only recently by  Zwiernik and Smith (2011). It is still unclear whether the  history of factor analysis may  repeat itself  in this context:
the current  applications are  plagued by infeasible solutions, but possibly 
there are unknown characterizing  features for  some situations, under which  these problems are always  avoided.

Though Lake's  `paralinear distance measure' reduces 
to $-\log(|r_{ij}|)$ for Gaussian and for binary variables, little is known about  the
distribution of the corresponding random variables. Even for a  Gaussian parent distribution, the variance in the asymptotic Gaussian distribution of 
$\hat{\rho}$ involves the unknown $\rho$ unless $	\rho=0$. To	 construct confidence intervals for $\rho \neq 0$, one uses Fisher's z-tranformation, which lacks this undesirable feature. However, for other than Gaussian parent distributions, even the z-transformed, correlation coefficient estimator  depends  in general  on
the unknown $\rho$;
see Hawkins (1989). 

Sometimes,  as  for the data in Section 5.3 above, the absolute value of an observed  negative simple or partial correlation is so large that it clearly contradicts the existence of a generating
star graph with only proper positive dependences on the root.
 But, correlations alone or constraints on the population probabilities alone, such as in equations \eqref{zwiern}, cannot help to decide
 whether an observed negative dependence, $r_{ij}\leq 0$, may arise from a population in which  $\rho_{ij}>0$.
\subsection{Machine learning procedures for star graphs with a latent root}
In the  machine learning literature, it is considered to be one of the simpler tasks to decide whether a joint binary distribution
has been generated over a directed star graph. 

   However, when a learning strategy is based on only  the bivariate  binary distributions, no joint distribution may exist for a set of given bivariate distributions. In the spirit of Zentgraf's  (1975)
    example, we take  $2 \times 2$ tables of counts for variable pairs  $(A,B)$;  $(A,C)$; $(B,C)$; again with the levels of the first variable changing fastest:
    \begin{equation}\bm{n}\T_{AB}=( 77   \n\, 41    \n\,   101  \n\,   221), \n  \bm{n}\T_{AC}=(105  \n\,  41  \n\,73  \n
    \, 221), \n  \bm{n}\T_{BC}=( 45 \n\,101  \n\, 73 \n,          221).   \end{equation}
    The  odds-ratios  are 4.1,          7.8, and          1.4, respectively.
    These are marginal tables of the following $2\times 2\times 2$ table of counts    $$  \bm{n}\T=( 19  \nn           26    \nn         86   \nn          15  \nn 58    \nn         15     \nn        15   \nn         206  ).   
    $$
    The conditional odds-ratio of $A,B$ given $C$, at level 0 of 0.13 and given $C$ at level 1 of 53.1, show  qualitatively strongly different dependences 
    of $A, B$ given $C$. When  inference is based only on  the  bivariate tables, one implicitly sets the third-order 
    central moment to zero and keeps all others unchanged. Transforming this vector back to probabilities gives negative entries and hence 
    shows that no joint binary distribution exists when   the log-linear,  three-factor interaction  is falsely taken to be zero. 
        
\section{Appendix: The inverse of a positive tetrad correlation matrix}

It was known already to Bartlett (1951), that an invertible  tetrad correlation matrix  implies a tetrad concentration matrix. His proof is in terms of  the general form  of the inverse
of sums of matrices. It is more direct to   give the overall concentration matrix  in explicit form.

For this, we use the partial inversion operator  of Wermuth, Wiedenbeck and Cox (2006), described in the context of   Gaussian parameter matrices  in Marchetti and Wermuth (2009). It can be viewed as a Gaussian elimination technique (for some history  see Grcar, 2011) and as a minor extension of the sweep operator for symmetric matrices discussed by Dempster (1972). This extension is to invertible, square matrices  so that an operation is undone by just reapplying the operator to  the same set. 

Let $\bm M$ be a square matrix of dimension $d$ for which all  principal submatrices are invertible.   To describe partial inversions on $d$, we  partition $\bm M$ into  a matrix $\bm m$ of dimension $d-1$, column vector $\bm v$,  row vector $\bm{w}\T$ and scalar $s$ 
\begin{equation} {\bm M}=\begin{pmatrix} \bm m& \bm v\\ \bm{w\T}& s\end{pmatrix}\!, \nn  \n \inv_d\, \bm{M}=\begin{pmatrix} \bm{m}-\bm{vw}\T/s & \bm{v}/s\\ -{\bm w}\T/s& 1/s \end{pmatrix}. \label{pinvd}\end{equation} The transformation of  $\bm m$ is also known as the vector  form of  a Schur complement; see Schur (1917).
For $\bm M$ a  covariance matrix, ${\bm v}/s$ contains  $d-1$  linear, least-squares  regression coefficients and 
$\bm{m}-\bm{vw}\T/s$ is  a residual covariance matrix.

For partial inversion on a set  $a \subseteq \{1, \ldots, d\}$,  one may conceptually apply equation \eqref{pinvd} repeatedly for each index $k$ of $a$: one first reorders the matrix $\bm M$ so that $k$ is the last  row and column,
applies equation \eqref{pinvd} and returns to the original ordering.  Several useful and nice properties of this operator have been derived, such as commutativity and symmetric difference.

For $Q=3$ leaves and the  correlation matrices of a star graph models, 
more detail  is
$$ 
\bm{\Psi}=\left(\begin{array}{rrrr} 
                 1&  \rho_{1L}\rho_{2L}& \rho_{1L}\rho_{3L}&\rho_{1L}\\
                 . & 1& \rho_{2L}\rho_{3L}& \rho_{2L}\\
                 .&.& 1& \rho_{3L}\\
                 .&.&.& 1
                    \end{array} \right), \nn \nn \nn 
                 \inv_L  \,  \bm{\Psi}=\left(\begin{array}{rrrr} 
                 1-\rho_{1L}^2&  0& 0&\rho_{1L}\\
                 . & 1-\rho_{2L}^2& 0& \rho_{2L}\\
                 .&.& 1-\rho_{3L}^2& \rho_{3L}\\
                 \scriptstyle \sim \n&\scriptstyle \sim \n&\scriptstyle \sim \n& 1
                    \end{array} \right),
 \label{starmat} $$

\begin{equation} \bm{\Psi}^{-1}=\left(\begin{array}{rrrr} 
                 1/(1-\rho_{1L}^2)&  0& 0&-\rho_{1L}/ (1-\rho_{1L}^2)\\
                 . & 1/(1-\rho_{2L}^2)& 0& -\rho_{2L}/(1-\rho_{2L}^2)\\
                 .&.& 1/(1-\rho_{3L}^2)&- \rho_{3L}/(1-\rho_{3L}^2)\\
                 .&.&.& 1+\sum_i \rho_{iL}^2/ (1-\rho_{iL}^2)
                    \end{array} \right), \label{invcor}
 \end{equation}
  where the
                   $ .\n$-notation indicates an entry that is symmetric, the $\scriptstyle \sim \n$-notation an entry  that is symmetric up to the sign.

For $Q$ items, a single root $L$ and $\bm \Psi$ denoting their joint correlation matrix, we have for instance 
 $$ \inv_L  \,  \bm{\Psi}=  \inv_{1,\dots, Q}\, \bm{\Psi}^{-1}, \nn \nn  \bm{\Psi}^{-1}= \inv_{1,\dots, Q}\,(  \inv_L  \, \bm{\Psi}).$$
                 For $Q>3$, the structure of these matrices is preserved in the sense that there is a diagonal matrix $\bm \Delta$ containing  $1-\rho_{iL}^2$ as elements,    a row vector $\bm{\lambda}\T$ with loadings $\rho_{iL}$,   the precision of $L$ as $s= 1+\sum_i \rho_{iL}^2/ (1-\rho_{iL}^2)$, 
            and        a row vector  $\bm{\delta}\T$  with elements $-\rho_{iL}/\{\sqrt{s}(1-\rho_{iL}^2)\}$.

                   For  $\bm P$ the correlation matrix of  the $Q$ items that are uncorrelated given $L$, one gets $\bm P$ as the submatrix of rows and columns $\{1, \ldots, Q\}$
                    of
                     $$\inv_L( \inv_L  \, \bm{\Psi})$$ and  $\bm{P}^{-1}$  as the submatrix of rows and columns 
                    $\{1, \ldots, Q\}$ of $ \inv_L  \, \bm{\Psi}^{-1} $ so that

                   $$ \bm{P}=\bm{\Delta}+\bm{\lambda}\bm{\lambda}\T, \nn   \nn   \bm{P}^{-1}=\bm{\Delta}^{-1}-\bm{\delta}\bm{\delta}\T .$$
                   Thus, for $ \bm{P}^{-1}$ a complete M-matrix with vanishing tetrads,  $ \bm{P}>0$ has tetrad form.\\

\noindent{\bf Acknowledgement.}  
We thank  referees and colleagues, especially  D.R. Cox and P. Zwiernik, for their constructive comments and J. Hardt for  letting us analyze his data. We used MATLAB for the computations.


\begin{thebibliography}{99}

\bibitem{AllmanEtal13}
\textsc{Allman, E.S., Rhodes, J..A., Sturmfels, B. and Zwiernik, P.} (2013).
Tensors of nonnegative rank two.  ArXiv:1305.0539 and Lin. Algeb. \& Appl. To appear 2014.

 
\bibitem{AndSkov10}
\textsc{Andersen P.K. and  Skovgaard L.T. } (2010).
\textit{ Regression with linear predictors.}
 Springer, New York.

\bibitem{AndRub56}
\textsc{Anderson, T.W.  and Rubin H.} (1956). 
Statistical inference in factor analysis. In: \textit{Proceedings of the
3rd Berkeley Symposium on Mathematical Statistics and Probability}, \textbf{5},
University of California Press, Berkeley, 111--150. 


\bibitem{Bartl35}
\textsc{Bartlett, M.S.} (1935).
 Contingency table interactions.
 \textit{Supplem.  J. Roy. Statist. Soc.}
 \textbf{2}, 248-252.

 


\bibitem{Bartl51}
\textsc{Bartlett, M.S.} (1951).
 An inverse matrix adjustment arising in discriminant analysis.
\textit{Ann. Math. Statist.} \textbf{22}, 107--111.

 
  \bibitem{BartForc00}
\textsc{Bartolucci, F. and Forcina, A.} (2000).
A likelihood ratio test for MTP$_2$ within binary variables.
\textit{Ann. Statist.} \textbf{28}, 
 1206--1218.


  \bibitem{Bolv82}
\textsc{Bolviken, E.} (1982).
 Probability inequalities for the multivariate normal with non-negative
partial correlations, 
 \textit{Scand. J. Statist.}
 \textbf{9}, 49--58.



\bibitem{Cox72}
\textsc{Cox, D.R.} (1972).
The analysis of multivariate binary data.
\textit{J. Roy. Statist. Soc. C}
\textbf{21}, 113--120.


\bibitem{CoxWer94}
\textsc{Cox, D.R. and Wermuth, N.} (1994).
A note on the quadratic exponential binary distribution.
\textit{Biometrika} \textbf{81}, 403--406.

 \bibitem{CoxWer02} 
 \textsc{Cox, D.R. and  Wermuth, N.} (2002). On some models for multivariate binary variables parallel in
complexity with the multivariate Gaussian distribution.
\textit{Biometrika} \textbf{89}, 462--469.

 
 
\bibitem{Dem72}
\textsc{Dempster, A.P.} (1972).
Covariance selection.
\textit{Biometrics} \textbf{28}, 157--175.
	
\bibitem{DFG76}
\textsc{DFG-Forschungsbericht} (1977).
\textit{Schwangerschaftsverlauf und Kindesentwicklung.} Boldt,
Boppard. 

\bibitem{Edw63}
\textsc{Edwards, A. W. F.} (1963).
The measure of association in a 2 $\times$ 2 table.
\textit{J. Roy. Statist.  Soc.  A}
\textbf{126},
109--114.


  \bibitem{EsaryProschanWalkup67} 
 \textsc{Esary,  J.D., Proschan F. and  Walkup, D.W. } (1967).
   Association of random variables, with applications.
\textit{Ann. Math.  Statist.} \textbf{38}, 1466--1474.

  
  \bibitem{Grcar11}
  \textsc{Grcar, J. F.}  (2011). Mathematicians of Gaussian elimination.
  \textit{Notices Amer.  Math.  Soc.}
  \textbf{58},  782--792.

  
 \bibitem{Hardt08}
 \textsc{Hardt, J.} (2008).
 The symptom checklist-27-plus (SCL-27-plus): a modern
conceptualization of a traditional screening instrument.
 \textit{GMS Psycho-Soc-Medicine}  \textbf{5}.

 \bibitem{Hawk89}
\textsc{Hawkins, D.L.} (1989). Using U statistics to derive the asymptotic distribution of Fisher's z-statistic.
\textit{Amer. Statist.} \textbf{43}, 235--237.

\bibitem{Heyw31}
 \textsc{Heywood, H.B.} (1931). On finite sequences of real numbers. 
 \textit{Proc. Roy. Soc. London
A} \textbf{134}, 486--501.


 

 
\bibitem{HollRos86}
\textsc{Holland P.W. and Rosenbaum, P.R.} (1986).
Conditional association and unidimensionality in monotone latent
variable models.
\textit{Ann. Statist.} \textbf{14}, 1523--1543.

 
 \bibitem{KarlRin83}
  \textsc{Karlin, S.  and Rinott, Y} (1983). M-matrices as covariance matrices of multinormal
distributions. 
\textit{Linear Algebra Appl.}
\textbf{52}, 419--438.
 

 \bibitem{Lake94}
 \textsc{Lake A.} (1994).
Reconstructing evolutionary trees from DNA and protein
sequences: paralinear distances.
 \textit{Proc. National Acad. Sci. USA}
 \textbf{91}, 1455--1459.

 
\bibitem{Lawley67}
{Lawley, D.N.} (1967).  
Some new results in maximum likelihood factor analysis.
\textit{Proc. Roy. Soc. Edinb. A}
\textbf{67},  256--264. 
 
\bibitem{Lazar50}
\textsc{Lazarsfeld, P.F.} (1950). The logical and mathematical foundation of latent structure analysis.
\textit{Measurement Prediction} \textbf{4}, 362--412.


\bibitem{LinzLew11}
\textsc{Linzer, D.A. and Lewis J.B.} (2011).
{poLCA: An R package for polytomous variable latent class analysis}
\textit{J. Statist. Softw.}  \textbf{42}.




 \bibitem{MairHatz07}
\textsc{Mair, P.  and Hatzinger, R.} (2007)
Extended Rasch modeling: the eRm package for
the implication of IRT models in R.
\textit{J. Statist. Softw.},  \textbf{20, 9}


\bibitem{MarWer09}
\textsc{Marchetti, G.M. and Wermuth, N.} (2009).
Matrix representations and independencies in
directed acyclic graphs.
\textit{Ann. Statist.}
\textbf{47}, 961--978.


\bibitem{Neym71}
\textsc{Neyman, J.} (1971). Molecular studies of evolution: a source of novel statistical problems. 
In \textit{Statistical decision theory and related topics,} 
Gupta, S.S. and Yackel, J. (eds).
Academic Press,
New York, 1--27. 

\bibitem{Ostro37}
\textsc{Ostrowski, A.} (1937). \"Uber die Determinanten mit überwiegender Hauptdiagonale.
\textit{Commentarii Mathematici Helvetici} \textbf{10}, 69-96.

\bibitem{Ostro56}
\textsc{Ostrowski, A.} (1956).
Determinanten mit überwiegender Hauptdiagonale
und die absolute Konvergenz von linearen
Iterationsprozessen.
\textit{Commentarii Mathematici Helvetici} \textbf{29}
175 -- 210 


\bibitem{RobThay82}
\textsc{Rubin, D.B. and Thayer, D.T.} (1982). EM algorithms for ML factor analysis.
\textit{Psychometrika} \textbf{47}, 69--76.

\bibitem{SadLau13}
\textsc{Sadeghi, K. and Lauritzen, S.L. } (2014).
Markov properties for mixed  graphs.
\textit{Bernoulli} \textbf{20}, 395--1028.


\bibitem{Schur1917}
\textsc{Schur, I.} (1917). \"Uber  Potenzreihen, die im Innern des Einheitskreises beschr\"ankt sind. 
\textit{J. Reine Angew. Mathem.}
 \textbf{147}, 205--232.
 
 
\bibitem{Spearm1904} 
\textsc{Spearman C.} (1904).
General intelligence, objectively determined and
measured.
\textit{American Journal of Psychology} 
\textbf{15}, 201--293.

 
 
\bibitem{SpirGlySchein93}
\textsc{Spirtes, P., Glymour, C. and Scheines R.} (1993). 
\textit{Causation,
prediction and search}.  
Springer, New York. (2nd ed., 2001, MIT Press, Cambridge)


\bibitem{Stang97}
\textsc{Stanghellini, E.} (1997).  Identification of a single-factor model using graphical Gaussian rules.
\textit{Biometrika} \textbf{84}, 241--244.

 \bibitem{StangVan13}
\textsc{Stanghellini E. and Vantaggi, B.}  (2013)
Identification of discrete concentration graph models with one hidden binary variable.
\textit{Bernoulli}. \textbf{19}, 1920--1937. 

 \bibitem{Ark12}
 \textsc{Van der Ark, L.A.} (2012). New developments in Mokken scale analysis in R.
 \textit{J. Statist. Software} \textbf{48,5}

 

\bibitem{Wer12}
\textsc{Wermuth, N.}  (2012). Traceable regressions.
\textit{International Statistical Review}. \textbf{80}, 415--438.


\bibitem{WerCoxMar06}
\textsc{Wermuth, N., Cox, D.R. and Marchetti, G.M.} (2006).
Covariance chains. 
\textit{Bernoulli}, \textbf{12}, 841--862.

 
 \bibitem{WerMarCox09}
\textsc{Wermuth, N.,  Marchetti, G.M.  and Cox, D.R.} (2009).
Triangular systems for symmetric
binary variables.
\textit{Electr. J. Statist.}
\textbf{3}, 932--955.



\bibitem{WerSad12}
\textsc{Wermuth N. and Sadeghi, K.} (2012).
Sequences of regressions and their independences (with discussion).
\textit{TEST}
\textbf{21}, 215-279.


\bibitem{WerWieCox06}
\textsc{Wermuth, N.,  Wiedenbeck, M. and Cox, D.R.} (2006).
Partial inversion for linear systems and partial closure of independence
graphs.
\textit{BIT, Numerical Mathematics}
\textbf{46}, 883--901.


 \bibitem{Zentgraf75}
 \textsc{Zentgraf,  R.} (1975).
 A note on Lancaster's definition of higher-order interactions.
 \textit{Biometrika}
 \textbf{62}, 375--378.

 \bibitem{ZwierSmith11}
 \textsc{Zwiernik, P.  and Smith, J.Q.} (2011).
 Implicit inequality constraints in a
binary tree model.
 \textit{Electr. J. Statist.}
\textbf{5}, 1276--1312.
 
 

\end{thebibliography}
\end{document}